\newcounter{all}
\newtheorem{theorem}[all]{Theorem}
\newtheorem{lemma}[all]{Lemma}
\def\codomainbool{\mathcal B}
\def\fclone#1{\langle#1\rangle}  
\def\fclonelim#1{\langle#1\rangle_\omega}
\def\LSM{\mathsf{LSM}}
\def\R{\mathbb{R}}
\def\Rnonneg{\R^{\geq0}}
\def\calF{\mathcal F}
\def\veca{\boldsymbol a}
\def\vecb{\boldsymbol b}
\def\vecw{\boldsymbol w}
\def\vecx{\boldsymbol x}
\def\vecy{\boldsymbol y}
\def\vecz{\boldsymbol z}
\def\vecone{\boldsymbol 1}
\def\EQ{\mathrm{EQ}}
\def\IMP{\mathrm{IMP}}
\def\tr{\mathrm{tr}}
\def\omegadef{$\mathrm{pps}_\omega$-definable}
\def\ppdef{pps-definable}  
\def\ppdefinable{\ppdef}
\def\ppformula{pps-formula}
\def\ppformulas{pps-formulas}
\def\omegadefinability{$\mathrm{pps}_\omega$-definability}
\def\dom{\{0,1\}}
\def\Fhat{\widehat F}
\def\myset{\mathcal{C}}
\title{LSM is not generated by binary functions}
\author{Colin McQuillan}
\address{Colin McQuillan\\
Department of Computer Science\\
University of Liverpool\\Ashton Building\\
Liverpool L69 3BX\\
United Kingdom.}
\thanks{This work was supported by an EPSRC doctoral training grant.}
\begin{document}

\maketitle

\noindent {\bf The material in this note is superceded by \cite{lsmv4}.}

\section{Introduction}

For all arities $k\geq 0$ define $\codomainbool_k$ to be the set of
functions $\{0,1\}^k\to \Rnonneg$ and define
$\codomainbool=\bigcup_k\codomainbool_k$. A function $F \in
\codomainbool_k$ is log-supermodular (lsm) if $F (\vecx \vee \vecy)F
(\vecx \wedge \vecy) \geq F(\vecx)F(\vecy)$ for all $\vecx,\vecy \in \{0,
1\}^k$ where $\vecx\vee\vecy$ and $\vecx\wedge\vecy$ denote elementwise
maximum and minimum respectively.
Denote the set of all lsm functions by $\LSM$.
Define $\IMP\in\codomainbool_2$ by
\[ \IMP(x,y)=\begin{cases}0&\text{ if $(x,y)=(1,0)$}\\
                         1&\text{ otherwise}\end{cases} \]

Bulatov et al. \cite{lsm} defined the operation of (efficient) \omegadefinability\
in order to study the computational complexity of certain
approximate counting problems.
They asked (\cite{lsm}, Section 4) whether all functions in
$\LSM$ can be defined by $\IMP$ and $\codomainbool_1$ in this sense.
We give a negative answer to this question.
{\v{Z}}ivn{\'y} et al. \cite{zivny} proved a negative result for the analogous optimisation problem.

\section{A set containing binary lsm functions}

We will construct a set $\myset$ containing binary lsm functions and closed
under a minimalist set of operations related to $T_2$-constructibility
\cite{tomo}. Later, in section 3, we will relate this to \omegadefinability.
For concision we will use vector notation such as
$\vecx=(x_1,\cdots,x_n)$,
$(\vecx,\vecy)=(x_1,\cdots,x_n,y_1,\cdots,y_m)$,
$\vecone=(1,\cdots,1)$, $(i,j,\vecx)=(i,j,x_1,\cdots,x_n)$, and
$\vecx\cdot\vecy=x_1y_1+\cdots+x_ny_n$.
For all
$k,l\geq 0$ and all $F\in\codomainbool_k$ and
$G\in\codomainbool_\ell$, define the tensor product $F\otimes
G\in\codomainbool_{k+\ell}$ by
\[ (F\otimes G)(x_1,\cdots,x_k,y_1,\cdots,y_\ell) =
    F(x_1,\cdots,x_k)G(y_1,\cdots,y_\ell) \]
For all $k\geq 2$ and all $F\in\codomainbool_k$
define the primitive contraction $\tr_{1,2} F\in\codomainbool_{k-2}$ by
\[ (\tr_{1,2}F)(x_3,\cdots,x_k) = \sum_{z=0}^1 F(z,z,x_3,\cdots,x_k) \]
For all $k\geq 0$ and all $F\in\codomainbool_k$ and all permutations
$\pi$ of $\{1,\cdots,k\}$ define the permutation
$F_\pi\in\codomainbool_k$ by
\[ F_\pi(x_1,\cdots,x_k) = F(x_{\pi(1)},\cdots,x_{\pi(k)}) \]

For all $k\geq 0$, all
$F\in\codomainbool_k$, and all $\vecw\in\{0,1\}^k$, define
$B(F,\vecw)$ by
\begin{align*}
B(F,\vecw)&=\sum_{x_1,\cdots,x_k} F(x_1,\cdots,x_k)F(1-x_1,\cdots,1-x_k)(-1)^{x_1w_1+\cdots+x_kw_k}\\
&=\sum_{\vecx} F(\vecx)F(\vecone-\vecx)(-1)^{\vecx\cdot\vecw}
\end{align*}
For all $k\geq 1$ and all $F\in\codomainbool_k$
and all $1\leq i\leq k$ and all $v\in\{0,1\}$,
define the primitive pinning $F_{i\mapsto v}$ by
\[ F_{i\mapsto v}(x_1,\cdots,x_{i-1},x_{i+1},\cdots,x_k)=F(x_1,\cdots,x_{i-1},v,x_{i+1},\cdots,x_k)\]
We say that $F'\in\codomainbool$ is a pinning of $F$ if $F'$ can be
obtained from $F$ by a (possibly empty) sequence of primitive pinnings.
Define
\begin{align*}
\myset&= \{ F\in\codomainbool : \text{ $B(F',\vecw)\geq 0$
  for all pinnings $F'$ of $F$}\\
  &\hspace{1in}\text{and all $\vecw\in\{0,1\}^k$ where $k$ is the arity of $F'$} \}
\end{align*}

\begin{lemma}\label{even_two}
Let $F$ be an arity $k$ function and let
$w_1,\cdots,w_k\in\{0,1\}$.  If $w_1+\cdots+w_k$ is odd or less than
two then $B(F,\vecw)\geq 0$.
\end{lemma}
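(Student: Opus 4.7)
The plan is to split into the two subcases implied by the hypothesis: either $\sum_i w_i = 0$, or $\sum_i w_i$ is odd. (Since $1$ is itself odd, these two cases together cover the stated hypothesis.)

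In the first case, $\vecw$ is the zero vector, so $(-1)^{\vecx\cdot\vecw}=1$ for every $\vecx$, and the defining formula collapses to $B(F,\vecw) = \sum_{\vecx} F(\vecx) F(\vecone-\vecx)$. This is a sum of products of values in $\Rnonneg$, hence nonnegative.

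In the second case, I would reindex the sum by the substitution $\vecx \mapsto \vecone - \vecx$. The factor $F(\vecx)F(\vecone-\vecx)$ is invariant under this swap, while $(\vecone-\vecx)\cdot\vecw = \sum_i w_i - \vecx\cdot\vecw$ gives $(-1)^{(\vecone-\vecx)\cdot\vecw} = (-1)^{\sum_i w_i}(-1)^{\vecx\cdot\vecw}$. The identity $B(F,\vecw) = (-1)^{\sum_i w_i} B(F,\vecw)$ follows; when $\sum_i w_i$ is odd this forces $B(F,\vecw) = 0$.

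I do not anticipate a genuine obstacle: the whole lemma reduces to a one-line parity cancellation plus the nonnegativity of $F$.
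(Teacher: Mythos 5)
Your proposal is correct and follows essentially the same route as the paper: the zero case is handled by nonnegativity of $F$, and the odd case by the substitution $\vecx\mapsto\vecone-\vecx$, which the paper phrases as $2B(F,\vecw)=\sum_{\vecx}F(\vecx)F(\vecone-\vecx)(-1)^{\vecx\cdot\vecw}(1+(-1)^{\vecone\cdot\vecw})=0$ and you phrase as $B(F,\vecw)=(-1)^{\sum_i w_i}B(F,\vecw)$. These are the same parity cancellation, so no further comment is needed.
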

\begin{proof}
If $w_1=\cdots=w_k=0$ then $B(F,\vecw)=\sum_{\vecx} F(\vecx)F(\vecone-\vecx)\geq 0$.
If $\vecone\cdot\vecw$ is odd then
\begin{align*}
B(F,\vecw)+B(F,\vecw)
&=\sum_{\vecx} F(\vecx)F(\vecone-\vecx)
  ((-1)^{\vecx\cdot\vecw}+(-1)^{(\vecone-\vecx)\cdot \vecw})\\
&=\sum_{\vecx} F(\vecx)F(\vecone-\vecx)
   (-1)^{\vecx\cdot\vecw}(1+(-1)^{\vecone\cdot\vecw})=0\qedhere
\end{align*}
\end{proof}

\begin{lemma}$\myset$ contains all lsm functions of arity at most 2.
\label{gotbinlsm}\end{lemma}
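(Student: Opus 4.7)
The plan is to unfold the definition of $\myset$ and verify, for every lsm $F$ of arity at most 2, that $B(F',\vecw)\geq 0$ for all pinnings $F'$ and all weight vectors $\vecw$. My first move is to observe that pinning preserves the lsm property: fixing the $i$-th coordinate to a constant $v$ restricts the domain to a sublattice of $\{0,1\}^k$ on which the join and meet agree with those of $\{0,1\}^{k-1}$, so the defining inequality is inherited. This lets me reduce the problem to proving $B(G,\vecw)\geq 0$ for every lsm $G$ of arity at most $2$ and every $\vecw$ of that arity.

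Next I would apply Lemma \ref{even_two} to kill almost every case: it disposes of all $\vecw$ with $\vecone\cdot\vecw$ either odd or less than $2$. For arities $0$ and $1$ this already covers every possible $\vecw$, so only the arity-$2$ case with $\vecw=(1,1)$ remains.

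The heart of the proof is then a direct four-term expansion of $B(G,(1,1))$ for arity-$2$ lsm $G$. Pairing each $\vecx$ with $\vecone-\vecx$ in the sum gives
\[
  B(G,(1,1)) = 2\bigl(G(0,0)G(1,1) - G(0,1)G(1,0)\bigr),
\]
and non-negativity of this expression is exactly the lsm inequality applied to $\vecx=(0,1)$ and $\vecy=(1,0)$, whose join is $(1,1)$ and whose meet is $(0,0)$.

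There is no real obstacle: once Lemma \ref{even_two} has been invoked to eliminate the trivial weight patterns, the remaining content of the lemma is a single identity that matches the definition of log-supermodularity on the nose. The only thing to be careful about is articulating the pinning-preserves-lsm step cleanly, so that the reduction to $G$ itself (rather than to arbitrary pinnings of $G$) is justified in one line.
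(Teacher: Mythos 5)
Your proof is correct and follows essentially the same route as the paper: reduce via Lemma \ref{even_two} to the arity-2 case with $\vecw=(1,1)$, then expand $B(G,(1,1))=2\bigl(G(0,0)G(1,1)-G(0,1)G(1,0)\bigr)$ and invoke log-supermodularity. The only (immaterial) difference is that you prove pinning preserves the lsm property, whereas the paper sidesteps this by noting that once $\sum_i w_i=2$ forces the arity to be $2$, the pinning must be trivial and $F=F'$ is already lsm by hypothesis.
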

\begin{proof}
Let $F'$ be an lsm function of arity at most 2.
Let $F$ be a pinning of $F'$ of arity $k$
and let $\vecw\in\{0,1\}^k$.
We would like to show that $B(F,\vecw)\geq 0$.  By Lemma
\ref{even_two} we may assume $\sum_i w_i = 2$.  Hence $k=2$.
So $F=F'$ and $B(F,\vecw)=B(F,(1,1))$ is
\begin{align*}
&F(0,0)F(1,1)-F(0,1)F(1,0)-F(1,0)F(0,1)+F(1,1)F(0,0)\\
&=2(F(0,0)F(1,1)-F(0,1)F(1,0))\geq 0\qedhere
\end{align*}
\end{proof}

\begin{lemma}\label{pin_commute}
Let $k,\ell\geq 0$. Let $F\in\codomainbool_k$ and $G\in\codomainbool_\ell$.
For all pinnings $H$ of $F\otimes G$,
there exists a pinning $F'$ of $F$ and a pinning $G'$ of $G$
such that $H=F'\otimes G'$.
If $k\geq 2$ then for all pinnings $H$ of $\tr_{1,2}F$, there exists
a pinning $F'$ of $F$ such that $H=\tr_{1,2}F'$.
\end{lemma}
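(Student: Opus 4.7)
The plan is to prove both claims by induction on the length of the sequence of primitive pinnings defining $H$, with the key point being that primitive pinning is a local operation that commutes past tensor products and past the primitive contraction $\tr_{1,2}$ up to a re-indexing.

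For the tensor product claim, the base case is that $H = F \otimes G$ is obtained by the empty sequence of pinnings, and we take $F' = F$ and $G' = G$. For the inductive step, suppose $H = \widetilde H_{i \mapsto v}$ where $\widetilde H$ is a pinning of $F \otimes G$ obtained by a shorter sequence. By induction $\widetilde H = \widetilde F \otimes \widetilde G$ for pinnings $\widetilde F$, $\widetilde G$ of $F$, $G$ of arities $k'$ and $\ell'$ respectively. I will unfold the definitions of $\otimes$ and of primitive pinning to check the identity
\[
(\widetilde F \otimes \widetilde G)_{i \mapsto v}
= \begin{cases}
(\widetilde F_{i \mapsto v}) \otimes \widetilde G & \text{if } 1 \leq i \leq k',\\[2pt]
\widetilde F \otimes (\widetilde G_{(i-k') \mapsto v}) & \text{if } k' < i \leq k'+\ell',
\end{cases}
\]
which gives $H$ in the required form $F' \otimes G'$.

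For the contraction claim, I proceed similarly. The base case $H = \tr_{1,2} F$ is handled by $F' = F$. For the inductive step, I write $H = \widetilde H_{i \mapsto v}$ with $\widetilde H$ a pinning of $\tr_{1,2} F$, so by induction $\widetilde H = \tr_{1,2} \widetilde F$ for some pinning $\widetilde F$ of $F$ of arity $\widetilde k$ with $\widetilde k \geq 2$. The variables of $\tr_{1,2} \widetilde F$ are indexed $1, \ldots, \widetilde k - 2$ and correspond to positions $3, \ldots, \widetilde k$ of $\widetilde F$, so I verify from the definitions that
\[
(\tr_{1,2}\widetilde F)_{i \mapsto v} = \tr_{1,2}\bigl(\widetilde F_{(i+2)\mapsto v}\bigr)
\]
for $1 \leq i \leq \widetilde k - 2$, which exhibits $H$ as $\tr_{1,2} F'$ for a pinning $F' = \widetilde F_{(i+2)\mapsto v}$ of $F$.

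Both identities are straightforward unfoldings of the definitions; the only thing to be careful about is the index shift by $2$ in the contraction case, caused by the fact that the contraction removes the first two arguments. There is no conceptual obstacle here — this lemma is essentially a bookkeeping statement allowing later proofs to push pinnings inside the operations used to build $\mathcal C$.
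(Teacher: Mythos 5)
Your proposal is correct and follows essentially the same argument as the paper: induction on the number of primitive pinnings, with the inductive step splitting on whether the pinned index lands in the $F$-block or the $G$-block for the tensor case, and shifting the index by $2$ in the contraction case.
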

\begin{proof}
We argue by induction on the number of primitive pinnings
used to construct $H$.
The first statement is trivial if $H=F\otimes G$.
Otherwise, by induction $H=(F'\otimes G')_{i\mapsto v}$
for some pinnings $F'$ of $F$ and $G'$ of $G$.
If $i$ is at most the arity $k'$ of $F'$ then $H=F'_{i\mapsto v}\otimes G'$;
otherwise $H=F'\otimes G'_{(i-k')\mapsto v}$. In both cases
$H$ is of the required form.
The second statement is trivial if $H=\tr_{1,2}F$.
Otherwise, by induction $H=(\tr_{1,2}F')_{i\mapsto v}$ for some pinning $F'$ of $F$, so $H=\tr_{1,2}(F'_{(i+2)\mapsto v})$ as required.
\end{proof}

\begin{lemma}$\myset$ is closed under tensor products, primitive contractions, and permutations.
\label{fctcat}\end{lemma}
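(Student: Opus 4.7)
I will use Lemma~\ref{pin_commute} (and an analogous observation for permutations) to reduce each case to showing that $B(H,\vecw)\geq 0$ where $H$ has a convenient shape, and then expand $B$ directly.

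For tensor products, Lemma~\ref{pin_commute} says every pinning of $F\otimes G$ has the form $F'\otimes G'$ for pinnings $F'$ of $F$ and $G'$ of $G$. If $F'$ has arity $k'$ and $G'$ has arity $\ell'$ and $\vecw=(\vecw_1,\vecw_2)\in\{0,1\}^{k'+\ell'}$, then directly from the definition
\[
B(F'\otimes G',\vecw)
=\Bigl(\sum_{\vecx}F'(\vecx)F'(\vecone-\vecx)(-1)^{\vecx\cdot\vecw_1}\Bigr)
 \Bigl(\sum_{\vecy}G'(\vecy)G'(\vecone-\vecy)(-1)^{\vecy\cdot\vecw_2}\Bigr)
=B(F',\vecw_1)\,B(G',\vecw_2),
\]
which is nonnegative since $F,G\in\myset$.

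For permutations, I would first observe (by induction on the number of primitive pinnings, analogous to the proof of Lemma~\ref{pin_commute}) that any pinning of $F_\pi$ has the form $(F')_{\pi'}$ for some pinning $F'$ of $F$ and some permutation $\pi'$; the key identity is that $(F_\pi)_{i\mapsto v}$ coincides with a permutation of $F_{\pi^{-1}(i)\mapsto v}$. Then substituting $\vecy=(x_{\pi(1)},\dots,x_{\pi(k)})$ in the sum defining $B$ gives $B((F')_{\pi'},\vecw)=B(F',\vecw_{\pi'})\geq 0$, where $\vecw_{\pi'}=(w_{\pi'(1)},\dots,w_{\pi'(k)})$.

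The main obstacle is the primitive contraction case. By Lemma~\ref{pin_commute}, a pinning of $\tr_{1,2}F$ has the form $\tr_{1,2}F'$ for some pinning $F'$ of $F$. Writing $f_{ab}(\vecx)=F'(a,b,\vecx)$, expanding
\[
(\tr_{1,2}F')(\vecx)(\tr_{1,2}F')(\vecone-\vecx)
=\sum_{a,b\in\{0,1\}}f_{aa}(\vecx)f_{bb}(\vecone-\vecx),
\]
and multiplying by $(-1)^{\vecx\cdot\vecw}$ and summing over $\vecx$ yields
\[
B(\tr_{1,2}F',\vecw)
=B(f_{00},\vecw)+B(f_{11},\vecw)
+\sum_{\vecx}\bigl(f_{00}(\vecx)f_{11}(\vecone-\vecx)+f_{11}(\vecx)f_{00}(\vecone-\vecx)\bigr)(-1)^{\vecx\cdot\vecw}.
\]
The first two summands are nonnegative because $f_{00}$ and $f_{11}$ are pinnings of $F\in\myset$. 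For the cross terms, I would split $B(F',(0,0,\vecw))$ and $B(F',(1,1,\vecw))$ according to the value of $(y_1,y_2)$; in the first, all four contributions enter with sign $+$, while in the second the $(0,1)$ and $(1,0)$ contributions enter with sign $-$. Adding gives
\[
B(F',(0,0,\vecw))+B(F',(1,1,\vecw))
=2\sum_{\vecx}\bigl(f_{00}(\vecx)f_{11}(\vecone-\vecx)+f_{11}(\vecx)f_{00}(\vecone-\vecx)\bigr)(-1)^{\vecx\cdot\vecw},
\]
so the remaining piece of $B(\tr_{1,2}F',\vecw)$ equals $\tfrac12[B(F',(0,0,\vecw))+B(F',(1,1,\vecw))]$, which is nonnegative since $F\in\myset$. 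All four pieces are $\geq 0$, so $B(\tr_{1,2}F',\vecw)\geq 0$, completing the contraction case.
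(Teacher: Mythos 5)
Your proof is correct and takes essentially the same route as the paper: reduce to pinnings via Lemma~\ref{pin_commute}, factor $B(F'\otimes G',(\vecw_1,\vecw_2))=B(F',\vecw_1)B(G',\vecw_2)$, and decompose $B(\tr_{1,2}F',\vecw)$ into two diagonal terms plus cross terms identified with $\tfrac12\bigl[B(F',(0,0,\vecw))+B(F',(1,1,\vecw))\bigr]$. The only difference is that you spell out the permutation case, which the paper dismisses as clearly following from the invariance of the definitions of $B$ and $\myset$ under permutations.
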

\begin{proof}
Let $F',G'\in\myset$ and let $H$ be a pinning of $F'\otimes G'$.
By Lemma \ref{pin_commute} there exist pinnings
$F$ of $F'$ and $G$ of $G'$ such that $H=F\otimes G$.
Let $k$ and $\ell$ be the arities of $F$ and $G$ respectively.
For all $a_1,\cdots,a_k,$ $b_1,\cdots,b_\ell\in\{0,1\}$,
\begin{align*}
&&&B(F\otimes G,(\veca,\vecb))\\
&=&&\sum_{\vecx,\vecy}(F\otimes G)(\vecx,\vecy)(F\otimes G)(\vecone-\vecx,\vecone-\vecy)(-1)^{\vecx\cdot\veca+\vecy\cdot\vecb}\\
&=&&\sum_{\vecx,\vecy}F(\vecx)G(\vecy)F(\vecone-\vecx)G(\vecone-\vecy)(-1)^{\vecx\cdot\veca+\vecy\cdot\vecb}\\
&=&&\sum_{\vecx}F(\vecx)F(\vecone-\vecx)(-1)^{\vecx\cdot\veca}\sum_{\vecy}G(\vecy)G(\vecone-\vecy)(-1)^{\vecy\cdot\vecb}\\
&=&&B(F,\veca)B(G,\vecb)\geq 0
\end{align*}
hence $F'\otimes G'\in\myset$.

Let $H$ be a pinning of $\tr_{1,2}F'$. By Lemma \ref{pin_commute}
there exists a pinning $F$ of $F'$ such that $H=\tr_{1,2}F$.
Let $k$ be the arity of $F$. For all
$\vecw\in\{0,1\}^{k-2}$,
\begin{align*}
&&&B(\tr_{1,2} F,\vecw)\\
&=&&\sum_{\vecx}(\tr_{1,2} F)(\vecx)(\tr_{1,2} F)(\vecone-\vecx)(-1)^{\vecx\cdot\vecw}\\
&=&&\sum_{\vecx}(F(0,0,\vecx)+F(1,1,\vecx))
(F(0,0,\vecone-\vecx)+F(1,1,\vecone-\vecx))
(-1)^{\vecx\cdot\vecw}\\
&=&&\sum_{\vecx}F(0,0,\vecx)F(0,0,\vecone-\vecx)(-1)^{\vecx\cdot\vecw}+\\
 &&&\sum_{\vecx}F(1,1,\vecx)F(1,1,\vecone-\vecx)(-1)^{\vecx\cdot\vecw}+\\
 &&&\sum_{\vecx}F(0,0,\vecx)F(1,1,\vecone-\vecx)(-1)^{\vecx\cdot\vecw}+\\
 &&&\sum_{\vecx}F(1,1,\vecx)F(0,0,\vecone-\vecx)(-1)^{\vecx\cdot\vecw}\\
&=&&B((F_{2\mapsto 0})_{1\mapsto 0},\vecw)+
    B((F_{2\mapsto 1})_{1\mapsto 1},\vecw)+\\
 &&&\sum_{i,j,\vecx}F(i,j,\vecx)F(1-i,1-j,\vecone-\vecx)
 \frac{1+(-1)^{i+j}}{2}(-1)^{\vecx\cdot\vecw}\\
&=&&B((F_{2\mapsto 0})_{1\mapsto 0},\vecw)+
    B((F_{2\mapsto 1})_{1\mapsto 1},\vecw)+\\
  &&&\hspace{0.5in}\frac{B(F,(0,0,\vecw))+B(F,(1,1,\vecw))}{2}\\
&\geq&& 0
\end{align*}
hence $\tr_{1,2}F'\in\myset$.
The definitions of $B$ and $\myset$ are clearly invariant under permutations.
\end{proof}

\begin{lemma}\label{ctform}
Let $\calF$ be a subset of $\codomainbool$ closed under
tensor products, primitive contractions, and permutations.
Let $F$ be a function of the form
\[ F(x_1,\cdots,x_n)=\sum_{x_{n+1},\cdots,x_{n+m}}\phi_1\cdots\phi_s \]
where each $\phi_j$ is a function application
$G_j(x_{i_{j,1}},\cdots,x_{i_{j,a(j)}})$
with $G_j\in\calF$, such that for $1\leq i\leq n$
exactly one pair $(j,k)$ satisfies $i_{j,k}=i$,
and for $n+1\leq i\leq n+m$
exactly two pairs $(j,k)$ satisfy $i_{j,k}=i$.
Then $F$, and hence all permutations of $F$, are in $\calF$.
\end{lemma}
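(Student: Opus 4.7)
The plan is to assemble $F$ from the $G_j$'s in three stages, one for each of the three closure properties. First, form the tensor product $H := G_1 \otimes \cdots \otimes G_s$, which by closure under tensor products lies in $\calF$ and has arity $N := a(1)+\cdots+a(s) = n+2m$. Its arguments are naturally indexed by the pairs $(j,k)$ with $1 \le k \le a(j)$, via the position function $p(j,k) = a(1)+\cdots+a(j-1)+k$.

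Next, choose a permutation $\sigma$ of $\{1,\ldots,N\}$ that sends the position $p(j,k)$ corresponding to the unique pair with $i_{j,k}=i$ to position $2m+i$ when $1\le i\le n$, and sends the two positions corresponding to pairs with $i_{j,k}=i$ to positions $2(i-n)-1$ and $2(i-n)$ (in either order) when $n+1\le i\le n+m$. The exact-once and exact-twice hypotheses on the indexing guarantee that $\sigma$ is a well-defined bijection: the total count matches since $\sum_j a(j) = n + 2m$. By closure under permutations $H_\sigma \in \calF$, and by construction its $\ell$-th argument for $1\le\ell\le 2m$ carries a copy of the summed variable $x_{n+\lceil \ell/2\rceil}$, while its last $n$ arguments carry $x_1,\ldots,x_n$ in order.

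Finally, apply $\tr_{1,2}$ exactly $m$ times. Each application sums over the common value of the first two arguments, which is precisely what summing one $x_{n+i}$ out of the product does, since that summed variable appears in exactly two slots now placed in positions $1$ and $2$. After each contraction, the next pair of summed slots shifts into the leading positions. The result is a function of arity $n$ whose arguments are $x_1,\ldots,x_n$, which equals $F$ by construction; it lies in $\calF$ by closure under primitive contractions, and closure under permutations delivers every permutation of $F$. The only real work is the combinatorial bookkeeping of Step 2, which is immediate from the hypotheses, so no substantive obstacle is anticipated: the lemma is a purely structural assertion extracting a ``gadget'' closure from the three generating operations.
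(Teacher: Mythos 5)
Your proof is correct and follows essentially the same route as the paper: tensor all the $G_j$ together, permute so that the two occurrences of each bound variable sit in adjacent leading positions, and then eliminate the bound variables with $\tr_{1,2}$. The only cosmetic difference is that the paper packages the $m$ successive contractions as an induction on $m$ (peeling off one bound variable at a time), whereas you apply one global permutation and then iterate $\tr_{1,2}$ directly; both handle the same bookkeeping.
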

\begin{proof}
We will prove the statement by induction on $m$.
Let $G=G_1\otimes \cdots \otimes G_s\in\calF$.
If $m=0$ then $F$ is a permutation of $G$ so $F\in\calF$.
Otherwise there exists a permutation $\pi$ of $\{1,\cdots,n+2m\}$
such that
\begin{align*}
&F(x_1,\cdots,x_n)\\
&=\sum_{x_{n+1},\cdots,x_{n+m}} G(x_{i_{1,1}},\cdots,x_{i_{s,a(s)}})\\
&=\sum_{x_{n+1},\cdots,x_{n+m}} G_\pi(x_{n+1},x_{n+1},x_{n+2},x_{n+2},\cdots,x_{n+m},x_{n+m},x_1,x_2,\cdots,x_n)
\end{align*}
Define $F'\in\codomainbool_{n+2}$ by
\begin{align*}
&F'(x_{n+1},x_{n+1}',x_1,\cdots,x_n)\\
&= \sum_{x_{n+2},\cdots,x_{n+m}}
G_\pi(x_{n+1},x_{n+1}',x_{n+2},x_{n+2},\cdots,x_{n+m},x_{n+m},x_1,x_2,\cdots,x_n)
\end{align*}
By the induction hypothesis $F'\in\calF$.
Hence $F=\tr_{1,2}F'\in\calF$.
\end{proof}

Define
\[ \EQ_3(x_1,x_2,x_3)=\begin{cases}1&\text{ if $x_1=x_2=x_3$}\\
                                  0&\text{ otherwise}\end{cases}\]

\begin{lemma}$\myset$ contains $\EQ_3$.\label{goteq}\end{lemma}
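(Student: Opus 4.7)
The plan is to verify the defining condition for $\myset$ directly: show $B(F',\vecw)\geq 0$ for every pinning $F'$ of $\EQ_3$ and every $\vecw\in\{0,1\}^{\text{arity}(F')}$. The pinnings of $\EQ_3$ fall into two groups --- $\EQ_3$ itself, and those obtained by fixing at least one coordinate (which have arity at most $2$).

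For the lower-arity pinnings: fixing any coordinate of $\EQ_3$ produces either the zero function, an arity-$1$ indicator at $0$ or at $1$, an arity-$2$ indicator supported on a single tuple in $\{(0,0),(1,1)\}$, or the constant $1$ on the empty tuple. Each of these is trivially log-supermodular, so Lemma \ref{gotbinlsm} places them in $\myset$; in particular the $B$-inequality holds at each of these pinnings.

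For $\EQ_3$ itself I will check $B(\EQ_3,\vecw)\geq 0$ directly. By Lemma \ref{even_two} I may assume $\vecone\cdot\vecw$ is even and at least $2$, which forces $\vecone\cdot\vecw=2$. The product $\EQ_3(\vecx)\EQ_3(\vecone-\vecx)$ is nonzero only when both arguments are constant tuples, i.e.\ $\vecx\in\{(0,0,0),(1,1,1)\}$, in which case the product equals $1$ and $\vecx\cdot\vecw\in\{0,2\}$, so the sign factor is $+1$. Therefore $B(\EQ_3,\vecw)=2\geq 0$.

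There is no substantive obstacle here; the lemma reduces to a tiny finite check once Lemma \ref{even_two} disposes of all $\vecw$ except those of weight exactly $2$. One might initially hope to invoke Lemma \ref{ctform} instead, but $\EQ_3$ acts as a $3$-way splitter while Lemma \ref{ctform} only permits internal variables of degree $2$ and external variables of degree $1$, so any decomposition of $\EQ_3$ into arity-$\leq 2$ pieces appears impossible --- the direct verification above is the natural route.
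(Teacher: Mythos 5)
Your proof is correct and follows essentially the same route as the paper: reduce to weight-two $\vecw$ via Lemma \ref{even_two} and check $B(\EQ_3,\vecw)=1+1\geq 0$ directly. The only difference is cosmetic --- for the proper pinnings (arity at most $2$) you invoke Lemma \ref{gotbinlsm} after observing they are single-point indicators and hence lsm, whereas the paper notes directly that $F(\vecx)F(\vecone-\vecx)$ vanishes identically for these, giving $B=0$; both checks are immediate.
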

\begin{proof}
  Let $F$ be a pinning of $\EQ_3$. Let $k$ be the arity of $F$.
  Let $\vecw\in\{0,1\}^k$. We would like to show that
  $B(F,\vecw)\geq 0$.  By Lemma \ref{even_two} we may
  assume $\sum_i w_i = 2$.  If $F=\EQ_3$ we have
  $B(\EQ_3,\vecw)=(-1)^0+(-1)^2 \geq 0$.  Otherwise
  $F$ has arity 2 hence $F=(\EQ_3)_{i\mapsto v}$ for some $i\in\{1,2,3\}$ and $v\in\{0,1\}$.
  By symmetry we may assume $i=3$.
  But for all $x_1,x_2$ we have $\EQ_3(x_1,x_2,v)\EQ_3(1-x_1,1-x_2,v)=0$
  which implies $B((\EQ_3)_{3\mapsto v},(1,1))=0$.
\end{proof}

\section{Application to functional clones}

We will use the following definitions of \ppformulas, $\fclone{-}$,
$\fclonelim{-}$, \omegadefinability, and $F_\phi$ for
\ppformulas\ $\phi$.  These definitions are taken from \cite{lsm} (Section 2),
except that we specialise to functions in $\codomainbool$ and we rename
$\EQ$ to $\EQ_2$.

Suppose $\calF\subseteq\codomainbool$ is some collection of functions,
$V=\{v_1,\ldots,v_n\}$ is a set of variables and
$\vecx:\{v_1,\ldots,v_n\}\to\dom$ is an assignment to those variables.
An atomic formula has the form $\phi=G(v_{i_1},\ldots,v_{i_a})$ where
$G\in\calF$, $a=a(G)$ is the arity of~$G$, and
$(v_{i_1},v_{i_2},\ldots,v_{i_a})\in V^a$ is a scope.  Note that
repeated variables are allowed.  The function $F_\phi:\dom^n\to\Rnonneg$
represented by the atomic formula $\phi=G(v_{i_1},\ldots,v_{i_a})$ is
just
$$
F_\phi(\vecx)=G(\vecx(v_{i_1}),\ldots,\vecx(v_{i_a}))=G(x_{i_1},\ldots,x_{i_a}),
$$
where from now on we write $x_j=\vecx(v_j)$.

A \ppformula{} (``primitive product summation formula'') is a
summation of a product of atomic formulas.  A \ppformula~$\psi$
over~$\calF$ in variables $V'=\{v_1,\ldots,v_{n+m}\}$ has the form
\begin{equation}\label{eq:ppformulaDef}
\psi=\sum_{v_{n+1},\ldots,v_{n+m}}\,\prod_{j=1}^s\phi_j,
\end{equation}
where $\phi_j$ are all atomic formulas over~$\calF$ in the variables~$V'$.
(The variables $V$ are free, and the others, $V'\setminus V$, are bound.)
The formula~$\psi$ specifies a function $F_\psi:\dom^n\to\Rnonneg$
in the following way:
\begin{equation}\label{eq:ppfunctionDef}
F_\psi(\vecx)=\sum_{\vecy\in\dom^m}\prod_{j=1}^sF_{\phi_j}(\vecx,\vecy),
\end{equation}
where $\vecx$ and $\vecy$ are assignments
$\vecx:\{v_1,\ldots,v_n\}\to\dom$ and
 $\vecy:\{v_{n+1},\ldots,\allowbreak v_{n+m}\}\to\dom$.
The functional clone~$\fclone\calF$ generated by~$\calF$
is the set of all  functions in $\codomainbool$
that can be represented
by a \ppformula{} over~$\calF\cup\{\EQ_2\}$
where $\EQ_2$ is the binary equality function
defined by $\EQ_2(x,x)=1$ and $\EQ_2(x,y)=0$ for $x\not=y$.

Then we say that an $a$-ary function $F$ is \omegadef{} over $\calF$ if there exists a finite subset~$S_F$
of $\calF$, such that, for every $\epsilon>0$, there is an $a$-ary function $\Fhat\in\fclone{S_F}$ with
$$\|\Fhat-F\|_\infty=\max_{\vecx\in\dom^a}|\Fhat(\vecx)-F(\vecx)|<\epsilon.$$

Denote the set of functions
in $\codomainbool$ that are
\omegadef{} over~$\calF\cup\{\EQ_2\}$
by~$\fclonelim\calF$;  we call this the {\it \omegadef{} functional clone\/} generated by~$\calF$.

As in \cite{lsm} we
write $\fclone{\IMP,\codomainbool_1}$ to
mean $\fclone{\{\IMP\}\cup\codomainbool_1}$,
and write $\fclonelim{\IMP,\codomainbool_1}$ to mean
$\fclonelim{\{\IMP\}\cup\codomainbool_1}$.

\begin{lemma}\label{is_a_clone}
Let $\calF$ be a set of functions containing $EQ_3$ and closed under
tensor products, primitive contractions, and permutations.
Let $\calF'\subseteq\calF$. Then $\fclone{\calF'} \subseteq \calF$.
\end{lemma}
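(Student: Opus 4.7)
The plan is to show that every function represented by a \ppformula{} over $\calF'\cup\{\EQ_2\}$ can be rewritten in the form required by Lemma~\ref{ctform} with atoms drawn from $\calF$; the conclusion is then immediate.

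First I would check that $\calF$ itself contains the unary constant $1$ and the binary equality $\EQ_2$. The constant is $\tr_{1,2}\EQ_3$, since $\EQ_3(0,0,x)+\EQ_3(1,1,x)=1$ for both $x\in\{0,1\}$. For $\EQ_2$ I would use the identity
\[\EQ_2(x,y)=\sum_{z_1,z_2}\EQ_3(x,z_1,z_2)\,\EQ_3(y,z_1,z_2),\]
which is already in the form of Lemma~\ref{ctform} (each free variable appears once, each bound variable twice), and so puts $\EQ_2$ in $\calF$.

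Now fix a \ppformula{} $\psi=\sum_{v_{n+1},\dots,v_{n+m}}\prod_{j=1}^s\phi_j$ over $\calF'\cup\{\EQ_2\}$. Since $\calF'\subseteq\calF$ and $\EQ_2\in\calF$, every atom already uses a function from $\calF$. The remaining step is to modify $\psi$, without changing the function it represents, so that every free variable occurs in exactly one atom and every bound variable in exactly two --- the hypothesis of Lemma~\ref{ctform}. Shortfalls are absorbed by appending atoms of the form $1(v)$, with the caveat that a bound variable currently occurring zero times needs two such atoms in order to reproduce the factor of $2$ coming from its empty summation. An overcount --- a variable appearing too many times, or multiple times within a single atom --- is corrected by renaming each of its occurrences to a fresh bound variable and then tying these fresh copies (together with the original variable, if it is free) into a single equivalence class by a tree of $\EQ_3$ atoms and auxiliary linker bound variables, arranged so that each renamed copy and each linker appears exactly twice in the new collection of atoms while the original free variable (if any) appears exactly once.

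After these adjustments the hypothesis of Lemma~\ref{ctform} holds with all functions drawn from $\calF$, so the represented function lies in $\calF$. The main obstacle is bookkeeping: one must check that each tree of $\EQ_3$'s forces all of its copies equal in every nonzero term of the product and contributes a multiplicative factor of exactly $1$, so that the represented function is genuinely preserved.
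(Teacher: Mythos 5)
Your proposal is correct, and it rests on the same two pillars as the paper's proof: extracting $\EQ_1$ and $\EQ_2$ from $\EQ_3$ via the closure operations, and then reducing everything to Lemma~\ref{ctform} by using $\EQ_3$ to decouple repeated occurrences of a variable and $\EQ_1$ to pad missing ones. The difference is in the decomposition. The paper proceeds by structural induction on the \ppformula: it first shows each atomic formula (promoted to a function of \emph{all} the variables) lies in $\calF$, substituting one argument position at a time through an $\EQ_3(y_j,x_{i_j}',x_{i_j})$ gadget, and then shows separately that pointwise products and marginalizations preserve membership, each step being its own small application of Lemma~\ref{ctform}. You instead rewrite the whole formula in one pass into the exact occurrence pattern that Lemma~\ref{ctform} demands and apply it once; this is more direct but concentrates all the bookkeeping in a single normalization step, whereas the paper trades it for three shorter, more mechanical verifications. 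The one piece you leave implicit --- the tree of $\EQ_3$'s tying $t$ renamed copies $u_1,\dots,u_t$ together with weight exactly $1$ --- does exist and is routine: a chain $\EQ_3(u_1,u_2,\ell_1),\EQ_3(\ell_1,u_3,\ell_2),\dots$, terminated at the original variable when it is free (and at the last copy, or with an $\EQ_2$, when it is bound), uses each $u_i$ exactly once and each linker $\ell_j$ exactly twice, vanishes unless all arguments agree, and sums to $1$ over the linkers when they do. With that supplied, and with your correct observation that a bound variable occurring zero times needs \emph{two} $\EQ_1$ atoms to reproduce the factor of $2$ from its vacuous summation, the argument is complete.
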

\begin{proof}
We need to show that each level in the definition of \ppdefinable\
function preserves membership in $\calF$: first that every atomic
formula over $\calF'\cup\{\EQ_2\}$ defines a function in $\calF$, secondly that a
product of functions in $\calF$ is in $\calF$, and finally that a
summation of functions in $\calF$ is in $\calF$.

Define $\EQ_1\in\codomainbool_1$ to be the constant function
$\EQ_1(0)=\EQ_1(1)=1$.  We have $\EQ_1=\tr_{1,2}\EQ_3\in\calF$.
Also, $\EQ_2=\tr_{1,2}(\EQ_1\otimes\EQ_3)\in\calF$.

Let $k,n\geq 0$.
Let $\phi=G(v_{i_1},\cdots,v_{i_k})$ be an atomic formula
with $G\in\calF'\cup\{\EQ_2\}$ and $i_1,\cdots,i_k\in\{1,\cdots,n\}$.
Define functions $A_0,\cdots,A_k$ by
\begin{align*}
A_0(x_1,\cdots,x_n,y_1,\cdots,y_k)&=G(y_1,y_2,\cdots,y_k)\\
A_1(x_1,\cdots,x_n,y_2,\cdots,y_k)&=G(x_{i_1},y_2,\cdots,y_k)\\
&\cdots\\
A_k(x_1,\cdots,x_n)&=G(x_{i_1},x_{i_2},\cdots,x_{i_k})
\end{align*}
for all $x_1,\cdots,x_n,y_1,\cdots,y_k\in\{0,1\}$.
We have $G\in\calF'\cup\{\EQ_2\}\subseteq\calF$, so
$A_0=\EQ_1\otimes \cdots \otimes \EQ_1 \otimes G\in\calF$.
Let $1\leq j\leq k$. Then for all $x_1,\cdots,x_n,y_{j+1},\cdots,y_k$ we have
\begin{align*}
&&&A_j(x_1,\cdots,x_n,y_{j+1},\cdots,y_k)\\
&=&&\sum_{x_{i_j}',y_j}
  A_{j-1}(x_1,\cdots,x_{i_j-1},x_{i_j}',x_{i_j+1},\cdots,x_n,y_j,\cdots,y_k)\EQ_3(y_j,x_{i_j}',x_{i_j})
\end{align*}
By Lemma \ref{ctform}, $A_{j-1},EQ_3\in\calF$ implies $A_j\in\calF$.  By induction on $j$ we have
$F_\phi=A_k\in\calF$.

Let $F,G\in \calF$ be functions of arity $n$. Then for all $\vecx\in\{0,1\}^n$,
\[(FG)(\vecx)=F(\vecx)G(\vecx)=\sum_{\vecy,\vecz} F(\vecy)G(\vecz)\EQ_3(x_1,y_1,z_1)\cdots \EQ_3(x_n,y_n,z_n)\]
By Lemma \ref{ctform}, $FG\in\calF$.

Let $F\in \calF$ be a function of arity $n+m$.
Define $F'\in\codomainbool_n$ by $F'(\vecx)=\sum_{\vecy\in\{0,1\}^m} F(\vecx,\vecy)$ for all $\vecx\in\{0,1\}^n$. Then for all $\vecx\in\{0,1\}^n$,
\[
F'(\vecx)=\sum_{\vecy} F(\vecx,\vecy)=\sum_{\vecy} F(\vecx,\vecy)\EQ_1(y_1)\cdots \EQ_1(y_m)\]
By Lemma \ref{ctform}, $F'\in\calF$.

Hence for every function $F\in\fclone{\calF'}$ we have $F\in\calF$.
\end{proof}

\begin{theorem} $\fclonelim{\IMP,\codomainbool_1}\neq \LSM$ \end{theorem}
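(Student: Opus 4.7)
The plan is to show $\fclonelim{\IMP,\codomainbool_1}\subseteq\myset$ and then to exhibit a log-supermodular function outside $\myset$.

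First I would combine the preceding lemmas. Every unary function is lsm vacuously and $\IMP$ is a binary lsm function, so Lemma~\ref{gotbinlsm} gives $\{\IMP\}\cup\codomainbool_1\subseteq\myset$. Together with Lemmas~\ref{fctcat} and~\ref{goteq} this lets Lemma~\ref{is_a_clone} apply with $\calF=\myset$, yielding $\fclone{\IMP,\codomainbool_1}\subseteq\myset$.

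Second I would extend this inclusion to the $\omega$-closure by observing that $\myset$ is closed under sup-norm (in fact pointwise) convergence: for each fixed sequence of primitive pinnings and each $\vecw$, $B(F',\vecw)$ is a polynomial in the values of $F'$, which are themselves values of $F$, so each defining condition is closed in $F$. Any $F\in\fclonelim{\IMP,\codomainbool_1}$ is a sup-norm limit of functions in $\fclone{\IMP,\codomainbool_1}\subseteq\myset$, so $F\in\myset$.

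The hard part will be the third step: producing an lsm function outside $\myset$. I would restrict attention to symmetric $F(\vecx)=f(x_1+\cdots+x_k)$, for which Karamata's inequality applied to the convex function $\log f$ shows $F\in\LSM$ iff $f$ is log-convex, i.e.\ $f(i-1)f(i+1)\geq f(i)^2$ for every $i$. After grouping by Hamming weight, $B(F,\vecw)$ becomes a small alternating combination of the products $f(j)f(k-j)$. For $k=3$ the only non-trivial case is $\vecw$ of weight $2$, and a short computation gives $B(F,(1,1,0))=2(f(0)f(3)-f(1)f(2))\geq 0$ (multiplying $f(0)f(2)\geq f(1)^2$ by $f(1)f(3)\geq f(2)^2$), so ternary symmetric examples do not suffice. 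For $k=4$ and $\vecw=(1,1,1,1)$ one instead obtains
\[ B(F,(1,1,1,1))=2f(0)f(4)-8f(1)f(3)+6f(2)^2, \]
which is no longer forced to be non-negative. Taking $f(0)=f(1)=f(2)=1$, $f(3)=2$, $f(4)=4$, the three log-convexity inequalities $1\geq 1$, $2\geq 1$, $4\geq 4$ all hold, yet $B(F,(1,1,1,1))=8-16+6=-2<0$. This $F$ lies in $\LSM\setminus\myset$, which by the first two steps is contained in $\LSM\setminus\fclonelim{\IMP,\codomainbool_1}$.
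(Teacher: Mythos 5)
Your proposal is correct and follows essentially the same route as the paper: the same chain of lemmas gives $\fclone{\IMP,\codomainbool_1}\subseteq\myset$, the limit step is the same continuity observation about $B(\cdot,\vecw)$ (you phrase it as $\myset$ being topwise closed, the paper as a contradiction for a specific approximant), and your witness $f=(1,1,1,2,4)$ is exactly the paper's function $S$ with the same computation $B(S,(1,1,1,1))=-2$. The only cosmetic difference is that you verify log-supermodularity via log-convexity of $f$ (Karamata) where the paper invokes Topkis's lemma; both reduce to the same three inequalities.
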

\begin{proof}
Define $S\in\codomainbool_4$ by
\begin{align*}
S(x_1,x_2,x_3,x_4)=\begin{cases}
    4&\text{ if $x_1+x_2+x_3+x_4=4$}\\
    2&\text{ if $x_1+x_2+x_3+x_4=3$}\\
    1&\text{ otherwise}
  \end{cases}
\end{align*}

We will show that $S\in\LSM\setminus\fclonelim{\IMP,\codomainbool_1}$.
To show $S\in\LSM$, by symmetry of S and a theorem of Topkis
(\cite{lsm} Lemma 9) it suffices to show that the functions
$S(0,0,x,y)$, $S(0,1,x,y)$ and $S(1,1,x,y)$ are lsm; this is
equivalent to $1\cdot 1\geq 1\cdot 1$ and $2\cdot 1\geq 1\cdot 1$ and
$4\cdot 1\geq 2\cdot 2$ respectively.

Suppose, for contradiction, that
$S\in\fclonelim{\IMP,\codomainbool_1}$.  Then by definition of
\omegadefinability, for every $\epsilon > 0$ we can choose an arity 4
function $F_\epsilon \in \fclone{\IMP,\codomainbool_1}$ such that
\[\max_{\vecx\in\{0,1\}^4}|F_\epsilon(\vecx)-S(\vecx)|< \epsilon \]

Since $B(F,(1,1,1,1))$ is continuous in $F$ we have
\[\lim_{\epsilon\to
  0}B(F_\epsilon,(1,1,1,1))=B(S,(1,1,1,1))=4-8+6-8+4 = -2<0\]

Hence
there exists $\epsilon$ such that $B(F_\epsilon,(1,1,1,1))<0$.

Each function in $\{\IMP\}\cup\codomainbool_1$ is lsm
and of arity at most two.
By Lemmas  \ref{gotbinlsm}, \ref{fctcat}, \ref{goteq} and
\ref{is_a_clone} we have $\fclone{\IMP,\codomainbool_1}\subseteq\myset$.
Hence $F_\epsilon\in\myset$. Hence
$B(F_\epsilon,(1,1,1,1))\geq 0$, a contradiction. Therefore
$S\not\in\fclonelim{\IMP,\codomainbool_1}$.
\end{proof}

\bibliography{main}{}

\begin{thebibliography}{1}

\bibitem{lsm}
Andrei~A. Bulatov, Martin~E. Dyer, Leslie~Ann Goldberg, and Mark Jerrum.
\newblock {Log-supermodular functions, functional clones and counting CSPs}.
\newblock {\em CoRR}, abs/1108.5288v2, 2011.

\bibitem{lsmv4}
Andrei~A. Bulatov, Martin~E. Dyer, Leslie~Ann Goldberg, Mark Jerrum,
and Colin McQuillan.
\newblock {The expressibility of functions on the Boolean domain, with applications to Counting CSPs}.
\newblock Submitted. Available online in {\em CoRR}, abs/1108.5288v4, 2012.

\bibitem{tomo}
Tomoyuki Yamakami.
\newblock Approximation complexity of complex-weighted degree-two counting
  constraint satisfaction problems.
\newblock In Bin Fu and Ding-Zhu Du, editors, {\em COCOON}, volume 6842 of {\em
  Lecture Notes in Computer Science}, pages 122--133. Springer, 2011.

\bibitem{zivny}
Stanislav {\v{Z}}ivn{\'y}, David~A. Cohen, and Peter~G. Jeavons.
\newblock The expressive power of binary submodular functions.
\newblock {\em Discrete Appl. Math.}, 157(15):3347--3358, 2009.

\end{thebibliography}
\bibliographystyle{plain}

\end{document}